\newtheorem {Proposition}{Proposition}
\def\BibTeX{{\rm B\kern-.05em{\sc i\kern-.025em b}\kern-.08em
    T\kern-.1667em\lower.7ex\hbox{E}\kern-.125emX}}
\begin{document}

\title{Mediumband Wireless Communication
\thanks{This work is jointly supported by Dublin City University, Republic of Ireland and The Royal Society of Edinburgh, UK.}
}

\author{\IEEEauthorblockN{Dushyantha A. Basnayaka, \textit{Senior Member}, \textit{IEEE}}
\IEEEauthorblockA{\textit{School of Electronics Engineering, Dublin City University}\\
Collins Ave, Dublin 9, Ireland\\
E-mail: d.basnayaka@dcu.ie}
}

\maketitle
\begin{abstract}
The fundamental phenomenon widely known as ``multipath'' is unavoidable in wireless communication, and affects almost every element of modern wireless communication systems. The impact of multipath on the received signal depends on whether the delay spread (i.e., spread of time 
delays associated with different multipath components) is large or small relative to the signalling period of the wireless communication system. In \textit{narrowband} systems, the delay spread is about one tenth (or less) of the signalling period. The delay spread and the signalling period of \textit{broadband} systems are in the same order of magnitude. In between these two extremes, there appears to exist an important, yet overlooked, class of systems whose delay spread is neither small nor large enough for them to fall into these two basic classes. In this paper, the effect of multipath on this class of systems denoted henceforth as ``\textit{mediumband}'' is studied, and its channel is characterized in compact form in order to enable future research into this class of wireless communication systems.     
\end{abstract}
\begin{IEEEkeywords}
multipath, delay spread, wireless channel models \end{IEEEkeywords}

\section{Introduction}
The global network of  telecommunication is a global effort spearheaded by the United Nations (UN), and is indispensable to the modern living. The role, that wireless communication plays in this global effort is widely evident\cite{ITU2022}. Wireless communication systems exploit different mediums like radio waves, infrared light and visible light to enable wireless data transmission. Its ability to keep people connected on the go and in emergencies are particularly appealing.\\
\indent In modern digital radio wireless communication, a transmitter (TX) sends data by transmitting a modulated electromagnetic (EM) wave, where typically the envelop of the EM wave varies according to a data signal. This data signal is typically an analog signal, but in digital wireless radio communication, only the signal points separated regularly in time carry information. The receiver (RX) receives an untidy mixture of attenuated and delayed versions (i.e., multipath) of the transmitted signal before the received signal being sampled and decoded. Nearly 50 years of wireless communication research efforts has been, and also is being, dedicated to perfect the detection of the desired signal from this untidy mixture of signals \cite{Stuber02}. The state-of-the-art of wireless communication is 5th-generation (5G) \cite{Jun14}.\\
\indent The multipath, that occurs due to the various objects in the environment between TX and RX, is both a blessing and an impairment for wireless communication. Often it is the multipath, that enables wireless communication when there is no line-of-sight (LoS) between TX and RX, which is the case in overwhelming number of instances of daily communication between people. On the other hand, in the presence of severe multipath, it is a considerably involving task to effectively and reliably detect the desired signal. The relative amplitudes and delays of these multipath components could combine constructively and in some instances destructively, and give rise to a concept known as ``fading". The strength and the nature of fading dictate almost every aspect of wireless systems, and is also the basis of the many common definitions of classes of wireless communication systems. \cite{Gold05}.\\
\subsection{Narrowband vs Broadband}
\begin{figure}[t]
	\centerline{\includegraphics*[scale=0.40]{./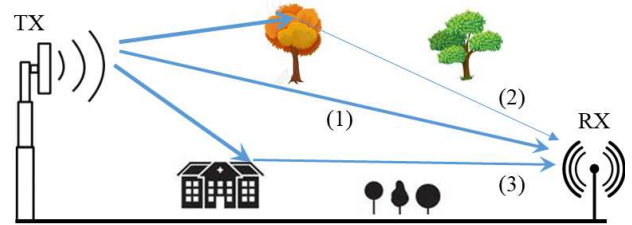}}
	\caption{A sketch of a typical radio wave communication system, where only three multipath components are shown. The arrows indicate the direction of propagation of the underlying EM waves.}\label{fig:fig1}
\end{figure}
\indent The classifications of narrowband and broadband (also known as wideband) are notable. A wireless system is said to be narrowband, if the symbol period (say $T_s$) is significantly greater than the delay spread (say $T_m$) of the multipath. The delay spread is significantly greater than $T_s$ in broadband wireless systems. The delay spread referred herein is the relative time difference between the latest and earliest significant multipath components\footnote{The delay spread is a random quantity, and often quantized by average delay spread or rms delay spread. These more technical definitions are omitted in this introductory section.}. The delay spread is dependent on the propagation environment, and in indoor settings, it typically ranges from 10 to 1000 nanoseconds, and in urban areas, delay spread can be as high as a few tens of microseconds. These differences in delay spread have many implications.\\
\indent If one wishes to have a relatively simpler narrowband system, the symbol period should be adjusted according to the delay spread, the longer the delay spread the longer the symbol period (in order at least to ensure $T_s \geq 10T_m$). This ensures the classical narrowband channel model, where the effect of multipath is modelled to a single multiplicative fading factor. The longer symbol period means low bit rates, which is typically undesirable. Also the narrowband systems designed to cope with smaller delay spread perform poorly in environments with large delay spread due to the excessive intersymbol interference (ISI). If one wishes to achieve higher bit rate in environments, where the delay spread is excessive, relatively complex wideband systems must be employed, where $T_m$ can be many times higher than the symbol period leading to a tapped delay line channel model \cite{Cassioli02}.\\ 
\indent The multipath is present in both classes of systems. The narrowband systems do not resolve multipath, while wideband systems do so up to some extent. \textbf{Since multipath is environment dependent, systems could encounter scenarios, either intermittently or otherwise, where the delay spread satisfies neither narrowband nor broadband constraints. For instance, when $T_s/10 \leq T_m \leq 9T_s/10$, the effect of multipath cannot be simply reduced to a single multiplicative factor, but still delay spread is sufficiently not wide enough (with respect to $T_s$) to resolve multipath.} The current paper analyses this class of radio wireless systems denoted henceforth as ``\textit{mediumband}'', and studies how the effect of multipath can be accurately captured into a channel model.       

\section{System Model}
\begin{figure*}
	\begin{align} \tag{11} \label{alpha:eq1_opt}
	\eta_o &= \sqrt{(1-0.25\beta)\left[\left(\sum_{n=0}^{N-1} |\gamma_n|^2 \right) - |h_o|^2 \right] + \sum_{n=0}^{N-1} \sum_{\substack{m=0 \\ m \neq n}}^{N-1} \gamma_n \gamma_m^* R(\tau_n-\tau_m)}.
	\end{align} 
\end{figure*}  
\begin{figure*}
	\begin{align} \tag{12} \label{auto-corr:eq1}
	R\left(\tau\right) = \left.\operatorname{sinc}\left( \frac{\tau}{T_s} \right) \frac{\cos\left( \beta \frac{\pi \tau}{T_s} \right)}{1 - \left( \frac{2 \beta \tau}{T_s} \right)^2} - \frac{\beta}{4} \operatorname{sinc}\left(\beta \frac{\tau}{T_s} \right) \frac{\cos\left( \frac{\pi \tau}{T_s} \right)}{1 - \left( \frac{\beta \tau}{T_s} \right)^2} \right.
	\end{align}
%\hline
\end{figure*}
\begin{figure}[t]
	\centerline{\includegraphics[scale=0.6]{./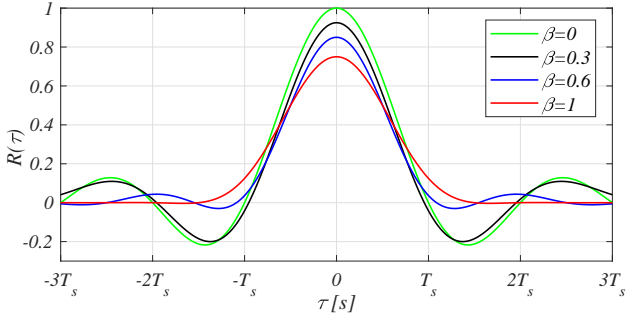}}
	\vspace{-3mm}
	\caption{$R(\tau)$ for different values of $\beta$, where $R(0)=1-0.25\beta$.}\label{fig:fig2}
\end{figure}
A radio wave (RF) wireless communication system with a single TX and a single RX in a rich scattering environment as shown in Fig. \ref{fig:fig1} is considered. Let $s(t)$ be the baseband equivalent transmitted signal corresponding to a single frame:
\begin{align} \label{eq1}
s(t) &= \sum_k I_k g(t-kT_s),
\end{align} 
where the symbol rate is $T_s$ and $\{I_k\}$ is the sequence of amplitudes drawn from a $2^b$ element constellation (e.g. BPSK, 4-PAM, 4-QAM, etc) by mapping $b$-bit binary blocks from an underlying information sequence $d_k$, and $g(t)$ is the pulse shaping filter. Combining the effects of root-raised-cosine transmit and receive pulse shaping filters, we herein assume a raised-cosine pulse for $g(t)$ with roll-off factor $\beta$. So $g(t)$ in time domain is given by:
\begin{align}
g(t) &= \begin{cases}
\frac{\pi}{4}\textup{sinc}\left(\frac{1}{2\beta}\right), & t \pm \frac{T_s}{2\beta} \\
\textup{sinc}\left(\frac{t}{T_s}\right)\frac{\cos\left(\frac{\pi \beta t}{T_s}\right)}{1-\left(\frac{2\beta t}{T_s}\right)^2}, & \text{otherwise.}
\end{cases}
\end{align}
In this paper, for simplicity, we assume only one dimensional constellations like BPSK and 4-PAM resulting real amplitudes for $\{I_k\}$, which in turn result in real $s(t)$. The transmitted RF signal is mathematically given by:
\begin{align}\label{rf:eq1}
x(t)= \text{Re}\left\{\sqrt{E_s}s(t)e^{j2\pi F_c t}\right\},
\end{align}
where $F_c$ is the carrier frequency, and $E_s$ is a factor that controls the transmit power. We also assume that the sequence $\{I_k\}$ is normalized such that $\mathcal{E}\left\{|I_k|^2\right\}=1$, which due to the effect of raised-cosine pulse in turn results:
\begin{align}\label{rf:eq2}
\mathcal{E}\left\{|s(t)|^2\right\}=1-\frac{\beta}{4} 
\end{align}
Also \eqref{rf:eq1} ensures that $\mathcal{E}\left\{|x(t)|^2\right\}=0.5E_s\left(1-\frac{\beta}{4} \right)$. The received RF signal at the RX, $y(t)$ can be given as a sum of multipath components as:
\begin{align}
y(t) &= \sqrt{E_s} \sum_{n=0}^{N-1} \text{Re}\left\{\alpha_n s(t-\tau_n)e^{j2\pi F_c (t-\tau_n)}\right\}, 
\end{align}
where $N$ is the number of multipath components, and $\tau_n$ and $\alpha_n$ are the absolute time delay and the gain of the $n$th component. In the absence of noise, the received baseband equivalent signal, $r(t)$ can be given by:
\begin{align}\label{eq:channel:1}
r(t) &= \sqrt{E_s} \sum_{n=0}^{N-1} \alpha_n e^{-j2\pi F_c\tau_n} s(t-\tau_n), \nonumber\\
&=  \sqrt{E_s} \sum_{n=0}^{N-1} \alpha_n e^{-j\phi_n} s(t-\tau_n),
\end{align}
where $\phi_n$ is known as the phase of the $n$th component, which is assumed as fixed at least within the time duration of a single frame corresponding to a case of static or terminals with slow relative movement. Without loss of generality, $0$ is assumed to be the path index of the earliest path meaning $\alpha_0$ and $\tau_0$ are the path gain and absolute delay of the earliest (also the shortest) path. Also, let the delay spread be defined as $T_m=\max_n |\tau_n-\tau_0|$.
In narrowband channels, where the constraint $T_m \leq 0.1T_s$ is at least approximately satisfied, the channel input-output (IO) relationship in \eqref{eq:channel:1} reduces to:
\begin{align}\label{eq:channel:2}
r(t) &\approx \sqrt{E_s} \left(\sum_{n=0}^{N-1} \gamma_n \right) s(t-\hat{\tau}),
\end{align}
where $\gamma_n=\alpha_ne^{-j\phi_n}$ for $\forall n$. Here, the multipath components are said to be nonresolvable, and combined into a single multipath component with delay $\hat{\tau} \approx \tau_0 \approx \tau_1 \dots \approx \tau_{N-1}$. The symbol timing synchronizer at the RX typically synchronizes to this common delay, $\hat{\tau}$ \cite{Coulson01}.\\
\indent In the mediumband regime, it can be seen that, even though multipath components are nonresolvable, the IO relationship in \eqref{eq:channel:2} is no longer accurate. The simulation studies (see Sec. \ref{simulation_study}) show that, as $T_m$ increases beyond $0.1T_s$, the narrowband identity in \eqref{eq:channel:2} weakens gradually. Hence, using the narrowband identity in \eqref{eq:channel:2} as the basis, in the sequel, we derive a new characterization for mediumband channels.    
\section{Channel Characterization}
\begin{Proposition}\label{proposition1}
In mediumband channels, where the delay spread  satisfies neither narrowband nor wideband assumptions, the baseband equivalent received signal in the absence of noise can be accurately modelled as:
\begin{align}\label{eq:r(t)}
r(t) &= \sqrt{E_s}h_os(t-\hat{\tau})+\sqrt{E_s} \eta_o u(t),
\end{align}
and the baseband equivalent received signal in the presence of noise as:
\begin{align}\label{eq:r2(t)}
r'(t) &= r(t) + n(t),
\end{align} 
where $s(t-\hat{\tau})$ is the desired signal; $u(t)$ is a complex uncorrelated zero mean unit variance interfering signal; and $n(t)$ is a complex zero mean additive-white-Gaussian-noise (AWGN) signal with $\sigma^2$ variance. The fading coefficients, $h_o$ and $\eta_o$ are respectively given by:
\begin{align}\label{eq:H}
h_o &= \frac{\sum_{n=0}^{N-1} \gamma_n R(\tau_n-\hat{\tau})}{1-\frac{\beta}{4}},
\end{align}
and \eqref{alpha:eq1_opt}, where $R(\tau)=\mathcal{E}\left\{s(t)s(t+\tau)\right\}$ is the autocorrelation function of $s(t)$, which is given in \eqref{auto-corr:eq1} and shown in Fig. \ref{fig:fig2}. The noise variance $\sigma^2$ is not dependent on the fading parameters, but only dependent on the noise bandwidth of the pulse shaping filter at the RX and the spectral density of the thermal noise, $N_0$. In room temperature, $N_0=-174$ dBm/Hz.  
\end{Proposition}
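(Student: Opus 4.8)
\textit{Sketch of approach.} The strategy is to turn the exact, noiseless input-output relation \eqref{eq:channel:1} into \eqref{eq:r(t)} by an exact orthogonal (linear-MMSE-type) decomposition of every delayed pulse stream onto the one-dimensional subspace spanned by the synchronized stream $s(t-\hat{\tau})$, and then to re-introduce the receiver thermal noise. Starting from $r(t)=\sqrt{E_s}\sum_{n=0}^{N-1}\gamma_n s(t-\tau_n)$, I would write, for each $n$,
\begin{equation}\label{plan:decomp}
s(t-\tau_n) = \lambda_n\, s(t-\hat{\tau}) + \tilde{s}_n(t), \qquad \lambda_n \triangleq \frac{R(\tau_n-\hat{\tau})}{R(0)},
\end{equation}
where $R(0)=1-\tfrac{\beta}{4}$ by \eqref{rf:eq2}. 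Since $R$ is the (even) autocorrelation of $s(t)$, this choice of $\lambda_n$ makes the residual uncorrelated with the desired stream: $\mathcal{E}\{\tilde{s}_n(t)\,s(t-\hat{\tau})\}=R(\tau_n-\hat{\tau})-\lambda_n R(0)=0$.

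Substituting \eqref{plan:decomp} and collecting terms gives $r(t)=\sqrt{E_s}\big(\sum_n\gamma_n\lambda_n\big)s(t-\hat{\tau})+\sqrt{E_s}\sum_n\gamma_n\tilde{s}_n(t)$, and the coefficient of $s(t-\hat{\tau})$ is exactly $h_o$ in \eqref{eq:H}. By linearity of the orthogonality just established, the residual $\sum_n\gamma_n\tilde{s}_n(t)$ is uncorrelated with $s(t-\hat{\tau})$; it has zero mean because $\mathcal{E}\{s(t)\}=0$ for a symmetric constellation; and normalizing it to unit variance via $u(t)\triangleq\eta_o^{-1}\sum_n\gamma_n\tilde{s}_n(t)$ with $\eta_o^2\triangleq\mathcal{E}\big\{|\sum_n\gamma_n\tilde{s}_n(t)|^2\big\}$ produces \eqref{eq:r(t)}. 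Evaluating $\eta_o$ is then second-order bookkeeping: multiplying out \eqref{plan:decomp} and using $\lambda_n R(0)=R(\tau_n-\hat{\tau})$ gives $\mathcal{E}\{\tilde{s}_n(t)\tilde{s}_m(t)\}=R(\tau_n-\tau_m)-R(\tau_n-\hat{\tau})R(\tau_m-\hat{\tau})/(1-\tfrac{\beta}{4})$; summing this against $\gamma_n\gamma_m^{*}$, peeling off the $n=m$ terms (where $R(0)=1-\tfrac{\beta}{4}$), and recognizing $\tfrac{1}{1-\beta/4}\big|\sum_n\gamma_n R(\tau_n-\hat{\tau})\big|^{2}=(1-\tfrac{\beta}{4})|h_o|^{2}$ leaves precisely the expression under the radical in \eqref{alpha:eq1_opt}. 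Finally, \eqref{eq:r2(t)} follows by adding the thermal noise as seen at the output of the receive pulse-shaping (matched) filter: that noise is white with density fixed by $N_0$ before filtering and is shaped only by the receive filter, so its variance equals $N_0$ times the filter's noise-equivalent bandwidth and carries no dependence on $\{\gamma_n,\tau_n,\hat{\tau}\}$.

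The identity $r(t)=\sqrt{E_s}h_o s(t-\hat{\tau})+\sqrt{E_s}\eta_o u(t)$ obtained this way is algebraically exact for \emph{any} delay profile, so the real content of the proposition---and the step I expect to demand the most care---is the \emph{modeling} assertion that in the mediumband window $0.1T_s\lesssim T_m\lesssim 0.9T_s$ the residual $u(t)$, though a deterministic functional of the same symbol sequence $\{I_k\}$, may be faithfully treated as a genuine zero-mean, unit-variance, (approximately) uncorrelated interferer: for such $T_m$ there is no resolvable structure in $u(t)$ to equalize as in the broadband tapped-delay-line model, yet its energy $\eta_o^{2}$ is no longer negligible, so the narrowband collapse \eqref{eq:channel:2} ceases to hold. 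I would support this with the second-order computation above together with the simulation evidence of Sec.~\ref{simulation_study}, while flagging the $\eta_o$ normalization---keeping the complex $\gamma_n$, the even real $R(\cdot)$, and the cross-path terms mutually consistent---as the part most prone to slip, and the justification of ``accurately'' for $u(t)$ as the only genuinely non-mechanical ingredient.
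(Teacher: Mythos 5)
Your derivation is correct and lands on exactly the paper's formulas: your second-order bookkeeping for $\mathcal{E}\{\tilde{s}_n(t)\tilde{s}_m(t)\}$, together with the identity $\sum_n\gamma_n R(\tau_n-\hat{\tau})=h_o\left(1-\tfrac{\beta}{4}\right)$, reproduces \eqref{eq:H} and the expression under the radical in \eqref{alpha:eq1_opt} term for term. The route is, however, mechanically different from the paper's. The paper never decomposes the individual paths; it keeps $h$ as a free complex parameter, forms the error $e(t)=r(t)-\sqrt{E_s}\,h\,s(t-\hat{\tau})$, expands the conditional second moment $J=\mathcal{E}\{|e(t)|^2\mid\Gamma,h\}$ into the quadratic \eqref{alpha:eq1} using $\mathcal{E}\{s(t)s(t+\tau)\}=R(\tau)$, and then minimizes $J$ by setting $\partial J/\partial h_I=\partial J/\partial h_Q=0$; the optimizer is $h_o$ and the minimized variance is $E_s\eta_o^2$. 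You instead apply the orthogonality principle path by path, projecting each $s(t-\tau_n)$ onto $s(t-\hat{\tau})$ with coefficient $\lambda_n=R(\tau_n-\hat{\tau})/R(0)$ and summing. The two are the same LMMSE computation done by projection rather than by calculus, so each buys something the other leaves implicit: your construction delivers the uncorrelatedness of $u(t)$ with the desired signal automatically, a property the paper only verifies a posteriori in its cross-correlation subsection (Sec.~III-C), whereas the paper's version makes explicit that $h_o$ is the minimizer over all $h$, so that $\eta_o^2$ is the smallest achievable interference power---a fact your write-up asserts only through the label ``MMSE-type.'' Your closing caveat is also well placed and consistent with the paper: the decomposition is algebraically exact for any delay profile, the word ``accurately'' in the proposition rests on the modeling step of treating the residual (a deterministic functional of the same symbol stream) as a zero-mean, unit-variance uncorrelated interferer, and the paper, like you, supports that step by the limit analysis ($\eta_o\to 0$ as $T_m\to 0$) and the simulations of Sec.~\ref{simulation_study} rather than by proof.
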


\begin{proof}
	See Appendix A
\end{proof}
The channel model in \eqref{eq:r(t)} and \eqref{eq:r2(t)} shows the desired, interfering and noise signals  in additive form, and the effect of multipath fading is conveniently captured as multiplicative factors. The model also captures the effects of transmit power (through $E_s$), pulse shaping (through $\beta$), and modulation (through $R(\tau)$). In the sequel, a few notable observations about the proposed mediumband channel model are made.   
\subsection{Signal-to-Interference-plus-Noise-Ratio (SINR)}
Signal-to-interference-plus-noise-ratio (SINR) is an important quantity, that captures the error and data rate performance of wireless communication systems. From \eqref{eq:r2(t)}, SINR of mediumband wireless channel can be obtained in compact form as:
\begin{align}
\setcounter{equation}{12}
\text{Mediumband:   \qquad} \text{SINR} &= 	\frac{E_s|h_o|^2\left(1-\frac{\beta}{4}\right)}{E_s\eta_o^2 + \sigma^2}, 
\end{align}
where the identity in \eqref{rf:eq2} is used. This clearly shows the dependence of SINR on fading and also on the spectral properties of pulse shaping filters and AWGN noise.  
\subsection{Narrowband Channel As a Special Case}
Furthermore, it can be seen that the narrowband channel in \eqref{eq:channel:2} is a special case of the mediumband channel. Consider $h_o$ and $\eta_o$ as $T_m=\max_n |\tau_n-\tau_0| \rightarrow 0$. As $T_m \rightarrow 0$, all delay deferences (i.e., $\tau_n - \tau_m$ and $\tau_n - \hat{\tau}$ $\forall n,m$) also approach zero. Therefore, \eqref{auto-corr:eq1} yields:
\begin{subequations}
\begin{equation}\label{R:eqs}
\lim_{|\tau_n - \hat{\tau}| \rightarrow 0} R(\tau_n - \hat{\tau}) = 1-\frac{\beta}{4}, \qquad \forall n
\end{equation}
\begin{equation}
\lim_{|\tau_n - \tau_m| \rightarrow 0} R(\tau_n - \tau_m) = 1-\frac{\beta}{4}, \qquad \forall n,m 
\end{equation}
\end{subequations}
Substituting these limits into equations for $h_o$ and $\eta_o$ in \eqref{eq:H} and \eqref{alpha:eq1_opt}, it can be shown that:
\begin{subequations}\label{H_eta:eqs}
	\begin{equation}
	\lim_{\substack{|\tau_n - \hat{\tau}| \rightarrow 0 \\ \forall n}} h_o = \sum_{n=0}^{N-1} \gamma_n =  \sum_{n=0}^{N-1} \alpha_n e^{-j\phi_n},
	\end{equation}
	\begin{equation}
	\lim_{\substack{R(\tau_n - \tau_m) \rightarrow \left(1-\frac{\beta}{4}\right) \\ \forall n,m}} \eta_o = 0,
	\end{equation}
\end{subequations}
which ensure the convergence of the mediumband channel into a narrowband channel as $T_m \rightarrow 0$.
\subsection{Cross-correlation of Desired and Interfering Signals}
Consider the cross-correlation of the desired signal, $h_os(t-\hat{\tau})$ and the interfering signal, $\eta_o u(t)$: $\mathcal{E}\left\{\overline{h_os(t-\hat{\tau})}\eta_o u(t)\right\}$, where the ``overline'' denotes the complex conjugation. The desired cross-correlation can equivalently be expressed by:
\begin{align}\label{cross_corr:eq1}
\mathcal{L} &= \mathcal{E}\left\{\overline{h_os(t-\hat{\tau})}\eta_o u(t)\right\}, \nonumber \\
&= \mathcal{E}\left\{\overline{\sqrt{E_s}h_os(t-\hat{\tau})}\left(r(t)- \sqrt{E_s}h_os(t-\hat{\tau})\right)\right\}.
\end{align}
The expectation operation with respect to the random process, $s(t)$, yields:
\begin{align}\label{cross_corr:eq2}
\!\!\!\mathcal{L} &= E_s \left[h_o^* \left(\sum_{n=0}^{N-1} \gamma_n R(\tau_n - \hat{\tau})\right)- |h_o|^2\left(1-\frac{\beta}{4}\right)\right].
\end{align}   
From \eqref{eq:H}, it is known that $\sum_{n=0}^{N-1}\gamma_n R(\tau_n - \bar{\tau})=h_o(1-\frac{\beta}{4})$, which in turn confirms that if $h_o$ is chosen optimally as given in \eqref{eq:H}, the correlation between the desired signal and the interfering signal will be zero.   
%
%\balance
\section{Simulation Study}\label{simulation_study}
In this section, a generic mediumband channel described in \eqref{eq:channel:1} is simulated on MATLAB. Typically $\tau_n$s are dependent on the environment, but without loss of generality, we assume $\tau_0=0$. The other delays, $\tau_n$ for $n=1,\dots,N-1$ are drawn from an uniform distribution, $U[0,T_m]$, where $T_m$ is the delay spread. The sequence of amplitudes are drawn from a BPSK constellation, so $\{I_k\} \in \left\{-1, 1\right\}$ for $\forall k$. Furthermore, the phases are drawn from a uniform distribution, $\phi_n \sim U[0,2\pi]$, and two scenarios for amplitudes namely ``uniform'' and ``exponential'' are considered. In the scenario of ``uniform'', $\alpha_1=\alpha_2=\dots =\alpha_n \propto 1/\sqrt{N}$. In the ``exponential'' scenario, the amplitudes are chosen such that  $\alpha_n \propto e^{-\kappa n}$, where $\kappa$ captures the decay of amplitudes. In both scenarios, the amplitudes are also normalized such that $\sum \alpha_n^2=1$. We consider the signal-to-interference ratio (SIR) defined by:
\begin{align}\label{seciv:eq1}
\text{SIR} &= \frac{\mathcal{E}\left\{\left|\sqrt{E_s}h_o s(t-\hat{\tau})\right|^2 \right\}}{\mathcal{E}\left\{\left|r(t) - \sqrt{E_s}h_o s(t-\hat{\tau})\right|^2 \right\}}, 
\end{align}
as the metric to assess the performance, where the expectation is over both $s(t)$ and fading. Due to the independence of the fading process and the channel input process, $s(t)$, the expectations in \eqref{seciv:eq1} are evaluated in two steps. The expectation over input process is firstly evaluated while keeping the fading parameters, $\Gamma=\left\{\gamma_n\right\}$, fixed, where $\hat{\tau}$ is chosen optimally using exhaustive search, and $h_o$ is from Proposition 1. Secondly, the expectation over the fading process is evaluated by repeating the first step until sufficiently stable values are obtained.\\
Fig. \ref{fig:fig3} shows the SIR performance of a mediumband wireless channel for different percentage delay spreads in both uniform and exponential amplitude profiles, where the percentage delay spread is defined by:
\begin{align}
\text{Percentage Delay Spread} &= \left(\frac{T_m}{T_s}\right) \text{ x } 100 \%.
\end{align}
It can be seen clearly that SIR gradually decreases as $T_m$ increases, but the decrease is greater in ``uniform'' amplitude profile. Furthermore, as shown on Fig. \ref{fig:fig4}, as $N$ increases, SIR performance increases, but quickly saturates (see $N>20$ on Fig. \ref{fig:fig4}). A detailed simulation study and experimental results are available on \cite{Bas2022}. 
\begin{figure}[t]
	\centerline{\includegraphics[scale=0.65]{./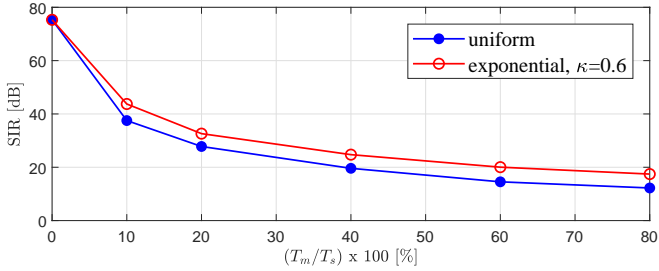}}
	\caption{\text{SIR} vs Percentage Delay Spread of mediumband wireless channels, where $\beta=0.8$ and $N=5$.}\label{fig:fig3}
\end{figure}
\begin{figure}[t]
	\centerline{\includegraphics[scale=0.65]{./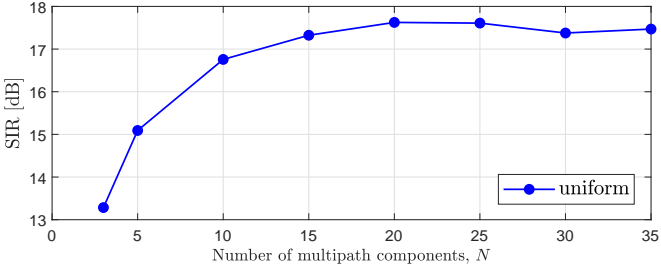}}
	\caption{\text{SIR} vs $N$ of mediumband wireless channels, where $\beta=0.8$ and percentage delay spread is $60\%$.}\label{fig:fig4}
\end{figure}
\balance
\section{Conclusion}
In this paper, a class of wireless channels denoted herein as ``mediumband'' was considered. In mediumband channels, the delay spread is neither small nor large enough, so the conventional models for narrowband and broadband channels appear to be not sufficient to capture the essence of these channels. We analytically studied the effect of different fading parameters (i.e., path delays, amplitudes and phases) on the mediumband channels, and proposed a channel characterization in compact form, where the effects of transmit power, modulation, pulse shaping and fading have been captured in convenient form.       
\appendices
\section{Proof of Proposition \ref{proposition1}}\label{Appendix:A}
\begin{figure*}
	\begin{align}\label{appA:eq3}
	\!\!\! J &= E_s \left[|h|^2 \mathcal{E}\left\{s(t-\hat{\tau})^2\right\} + \sum_{n=0}^{N-1} |\gamma_n|^2 \mathcal{E}\left\{s(t-\tau_n)^2\right\} +
	\sum_{n=0}^{N-1} \sum_{\substack{m=0 \\ m \neq n}}^{N-1} \gamma_n \gamma_m^*\mathcal{E}\left\{s(t-\tau_n)s(t-\tau_m)\right\}
	\right. \nonumber \\
	& \qquad \qquad \qquad \qquad \qquad \qquad \qquad \qquad \qquad \qquad \qquad \qquad \qquad \qquad - \left. \sum_{n=0}^{N-1} 2\text{Re}\left\{\gamma_n^*h\right\} \mathcal{E}\left\{s(t-\tau_n)s(t-\hat{\tau})\right\} \right] \tag{23}
	\end{align}
\end{figure*}
\begin{figure*}
	\begin{align}\tag{26} \label{alpha:eq1}
	J &= E_s \left[(1-0.25\beta)\left(|h|^2 + \sum_{n=0}^{N-1} |\gamma_n|^2 \right) + \sum_{n=0}^{N-1} \sum_{\substack{m=0 \\ m \neq n}}^{N-1} \gamma_n \gamma_m^* R(\tau_n-\tau_m)
	- \sum_{n=0}^{N-1} 2\text{Re}\left\{\gamma_n^*h\right\} R(\tau_n-\hat{\tau})\right].
	\end{align}
%\hline	
\end{figure*}
Assuming $\hat{\tau}$ is the time which the demodulator is synchronized to, we consider the error signal defined by:
\begin{align}\label{appA:eq1}
e(t) &= r(t) - \sqrt{E_s}hs(t-\hat{\tau}).
\end{align}
The analysis herein is valid whether the symbol timing synchronization, $\hat{\tau}$ in \eqref{appA:eq1} is optimal or not. If $\hat{\tau}$ is optimal, the error signal would be weaker. It is otherwise, if $\hat{\tau}$ is not optimal. However, in the simulation study in Sec. \ref{simulation_study}, we use exhaustive search to find the optimum timing for $\hat{\tau}$. Consider the following conditional expectation:
\begin{align}
J &= \mathcal{E}\left\{\left.|e(t)|^2\right | \Gamma, h \right\},\\
&= \mathcal{E}\left\{\left.\left|r(t) - \sqrt{E_s}hs(t-\hat{\tau})\right|^2 \right| \Gamma, h \right\}, \label{appA:eq2}
%%
%&= E_s |\sum_{n=0}^{N-1} \gamma_n s(t-\tau_n), - \sqrt{E_s}hs(t-\hat{\tau})|^2.
\end{align}
where the expectation is taken with respect to the random process, $s(t)$, and is also conditioned on the fading parameters, that is $\Gamma=\left\{\gamma_n\right\}$. This conditional expectation is the appropriate choice to characterize the behaviour of the channel for a given set of fading parameters. The \eqref{appA:eq2} can be expanded and simplified as shown in \eqref{appA:eq3}. 
It is clear that $\mathcal{E}\left\{s(t-\tau_n)s(t-\tau_m)\right\}$ appears in \eqref{appA:eq3} is the autocorrelation function of $s(t)$, which is a signal resulted from linear modulation. From \cite[eq: 4-4-11]{Proakis00}, the desired autocorrelation can be obtained as:
\begin{align*}
\mathcal{E}\left\{s(t)s(t+\tau)\right\} &= \frac{1}{T_s} \sum_{q=-\infty}^{\infty} \psi_{ii}(q) \psi_{gg}(\tau - qT_s), 
\end{align*}
where $\psi_{ii}(q)$ is the autocorrelation of the real information sequence $\left\{I_k\right\}$, and  $\psi_{gg}(\tau)$ is the time autocorrelation function of the raised cosine pulse, $g(t)$.  $\psi_{ii}(q)$ is defined as $\mathcal{E}_I\left\{I_k I_{k+q}\right\}$. In light of $\mathcal{E}\left\{|I_k|^2\right\}=1$, $\psi_{ii}(q)$ can be shown to be equal to:
\begin{align}
\setcounter{equation}{23}
\psi_{ii}(q) &= \begin{cases}
1 & q=0 \\
0 , & \text{otherwise.}
\end{cases}
\end{align}
Hence, $\mathcal{E}\left\{s(t)s(t+\tau)\right\} = \frac{1}{T_s} \psi_{gg}(\tau)$. Here the time autocorrelation function, $\psi_{gg}(\tau)$ is defined as $\mathcal{E}_t\left\{g(t)g(t+\tau)\right\}$, which is well known, and the desired result can be obtained as:
\begin{align}\label{appA:eq4}
\mathcal{E}\left\{s(t)s(t+\tau)\right\} &= R(\tau),
\end{align}
where $R(\tau)$ is given in \eqref{auto-corr:eq1}. Applying the result in \eqref{appA:eq4} into \eqref{appA:eq3}, $J$ can be simplified to get \eqref{alpha:eq1}. 
Due to the fact that the constellations are typically symmetric, $\mathcal{E}_I\left\{I_k\right\}=0$. Thus the linear digital modulation (i.e., \eqref{eq1}) ensures that:
\begin{align}\label{appA:eq5}
\setcounter{equation}{26}
\mathcal{E}\left\{s(t)\right\} &= 0.
\end{align}
So, the error process, $e(t)$ is also a zero mean random process with variance $J$, which is given in \eqref{alpha:eq1}.
\subsection{Optimization of $h$}
It is clear that $J$, which is quadratically dependent on $h$ can further be optimized. Consider the following partial derivatives of $J$ with respect to $h$:
\begin{align}
\frac{\partial J}{\partial h_I} = \left(1-\frac{\beta}{4}\right)h_I - \sum_{n=0}^{N-1}\text{Re}\left\{\gamma_n\right\}R(\tau_n-\hat{\tau}),
\end{align}
\begin{align}
\frac{\partial J}{\partial h_Q} = \left(1-\frac{\beta}{4}\right)h_Q - \sum_{n=0}^{N-1}\text{Im}\left\{\gamma_n\right\}R(\tau_n-\hat{\tau}).
\end{align}
where $h_I$ and $h_Q$ denote the real and imaginary parts of $h$ meaning $h=h_I+jh_Q$. Equating these partial derivatives to zero, which is:
\begin{align}
\frac{\partial J}{\partial h_I} = \frac{\partial J}{\partial h_Q} &= 0,
\end{align}
yields the optimum $h=h_o$, which can be found to be equal to:
\begin{align} \label{opt_h}
h_o &= \frac{\sum_{n=0}^{N-1} \gamma_n R(\tau_n-\hat{\tau})}{1-\frac{\beta}{4}}.
\end{align}
This completes the proof for $h_o$. Furthermore, by substituting the optimum value for $h$ in \eqref{opt_h}, one can obtain the optimum error variance of the error process, $e(t)$ as $E_s \eta_o^2$, where $\eta_o$ is in \eqref{alpha:eq1_opt}. Consequently, in statistically equivalent form, $e(t)$ may be modelled by $e(t) = \sqrt{E_s}\eta_o u(t)$,
where $u(t)$ is a zero mean, unit variance complex random process. In the absence of AWGN noise, $r(t)$ can thus be written as $r(t) = \sqrt{E_s}h_o s(t-\hat{\tau}) + \sqrt{E_s}\eta_o u(t)$.

\begin{thebibliography}{00}
\bibitem{ITU2022} ICT Data and Analytics Division of ITU ``Global Connectivity Report 2022,'' Online: https://www.itu.int/hub/publication/d-ind-global-01-2022.
%
\bibitem{Stuber02}
G. L. Stuber, ``Principals of Mobile Communications,'' 2nd ed., Kluwer Academic Publishers, 2002.
%
\bibitem{Jun14}
V. Jungnickel et al., "The role of small cells, coordinated multipoint, and massive MIMO in 5G," in IEEE Communications Magazine, vol. 52, no. 5, pp. 44-51, May 2014.
%
\bibitem{Gold05}
A. Goldsmith, Wireless Communications, Cambridge University Press,
2005.
%
\bibitem{Cassioli02}
D. Cassioli, M. Z. Win and A. F. Molisch, ``The ultra-wide bandwidth indoor channel: from statistical model to simulations,'' IEEE Journal on Selected Areas in Communications, vol. 20, no. 6, pp. 1247--1257, Aug. 2002.
%
\bibitem{Coulson01}
A. J. Coulson, ``Maximum likelihood synchronization for OFDM using a pilot symbol: algorithms,'' IEEE Journal on Selected Areas in Communications, vol. 19, no. 12, pp. 2486--2494, Dec. 2001
%
\bibitem{Proakis00}
J. G. Proakis, “Digital Communication,” McGraw-Hill Education, 4th edition, 2000.
%
%
\bibitem{Bas2022}
D. A. Basnayaka, and et. al. ``Introduction to mediumband wireless communication,'' in preparation for submission to IEEE Trans. on Wireless Communications, Sep. 2022.
%	
\end{thebibliography}
\end{document}